\newcommand{\myp}{\mbox{$\:\!$}}
\newcommand{\myn}{\mbox{$\;\!\!$}}
\newcommand{\mynn}{\mbox{$\:\!\!$}}
\newtheorem{theorem}{Theorem}
\newtheorem{lemma}
{Lemma}
\theoremstyle{definition}
\newtheorem{remark}{Remark}
\title{A citation index bridging Hirsch's~$h$ and Egghe's~$g$}
\author{\large
Ruheyan Nuermaimaiti,$^{\!\text{a,b}\,\orcidlink{https://orcid.org/0000-0002-5764-9949}}$ Leonid Bogachev$^{\myp\text{a}\,\orcidlink{0000-0002-2365-2621}}$ and Jochen Voss$^{\myp\text{a}\,\orcidlink{0000-0002-2323-3814}}$\\[.7pc]
{\small \tt \href{mailto:R.Nuermaimaiti@imperial.ac.uk}{R.Nuermaimaiti@imperial.ac.uk}; \href{mailto:L.V.Bogachev@leeds.ac.uk}{L.V.Bogachev@leeds.ac.uk}; \href{mailto:J.Voss@leeds.ac.uk}{J.Voss@leeds.ac.uk}}\\
{\small $^\text{a}$\myp Department of Statistics, School of Mathematics, University of Leeds, Leeds, LS2 9JT, UK}\\[-.2pc]
{\small 
$^\text{b}$\myp Department of  Mathematics, Faculty of Natural Sciences, 
Imperial College London,}\\[-.2pc] 
{\small South Kensington Campus, London, 
SW7 2AZ, UK}}
\date{}
\begin{document}
\maketitle

\begin{abstract}
\noindent
We propose a new citation index $\nu$ (``nu'') and show that it lies between the classical $h$-index and $g$-index. This idea is then generalized to a monotone parametric family $(\nu_\alpha)$ ($\alpha\ge 0$), whereby $h=\nu_0$ and  $\nu=\nu_1$, while the limiting value $\nu_\infty$ is expressed in terms of the maximum citation.

\medskip\noindent
\emph{Keywords:} citation indexes, $h$-index, $g$-index, scientometrics, citation data

\bigskip\smallskip
\centerline{\bf Significance Statement}

\medskip\noindent
The widely used Hirsch’s $h$-index values productivity but overlooks how highly each paper is cited, while 
Egghe’s $g$-index emphasizes top-cited work but neglects lower-cited contributions. To address these imbalances, we propose the $\nu$-index, a synthetic metric that accounts for both highly and modestly cited publications and, therefore, offers a fairer and more balanced assessment of research impact.
\end{abstract}

\section{Introduction} \label{intro}

\subsection{\bf Background}
\citet{hirsch2005index} made a breakthrough in 
scientometrics by proposing for the first time a simple citation index (commonly referred to as \emph{$h$-index}), which had the advantage of aggregating the author's productivity on the basis of both the number of published papers and their quality measured by generated citations. Before that, only some extensive summary statistics were used, such as the mean number of citations per paper. Since then, the $h$-index has become a standard metric of authors' reputation and productivity, for instance routinely taken into consideration in academic appointments and promotions. 

Specifically, the $h$-index is defined as the maximum number $h$ of an author's papers, each cited at least $h$ times \citep{hirsch2005index}. 
Therefore, this index only takes into account the fact of a relatively ``high'' citation of a paper, but the actual number of citations of such a paper is effectively ignored. 

To remedy such censoring of larger citations, 
an alternative citation index (referred to as \emph{$g$-index}) was proposed by 
\citet{egghe2006improvement}, defined as the maximum number $g$ of an author's most cited papers, such that their total number of citations is at least $g^2\myn$. From this definition, it is easy to see that $h\le g$ \citep{egghe2006theory}.

These two (by now classical) indexes have attracted a lot of interest and generated ample research into their analytic properties and performance on real datasets, including their estimation in a variety of statistical models of count data (see, e.g., \citealp{egghe2006theory,er2006, egghe2008h,
{web-h}
}).
Furthermore, many modifications and alternative variants of the $h$ and $g$ indexes have been proposed, focusing on certain features of the citation profile (see, e.g.,  \citealp{alonso2009h, guns2009real, hirsch2019index,web-h}, and further references therein). 

\subsection{\bf New index and layout}
In the present work, we introduce a new citation index $\nu$ (``nu'') aiming to bridge the mathematical definitions of Hirsch's $h$ and 
Egghe's $g$. This idea was first coined in \cite{Nuer2023}. Namely, we start by observing that the $h$-index can be represented as a sum of certain indicator functions that censor papers to ensure a required minimum of citations. Building on this observation, our $\nu$-index essentially mimics the summative nature of the $h$-index but the new summation explicitly involves the 
numbers of citations of the top papers. 

We are then able to show that our $\nu$ is ``sandwiched'' between $h$ and $g$; more precisely, we prove the two-sided inequalities 
\begin{equation*}
h\le \nu\le g^*\!,
\end{equation*}
where $g^*\!$ denotes a modified (unconstrained) $g$-index, obtained if we are allowed to add fictitious zeros to the citation vector  \citep{woeginger2008axiomatic}. 
On the other hand, a ``tempered'' version $\bar{\nu}$ of the $\nu$-index, modified so as to be not larger than the number of published papers, satisfies the inequalities
\begin{equation*}
h\le \bar{\nu}\le g.
\end{equation*}
We will finish off by introducing a more general family of citation indexes $(\nu_\alpha)$, where $\nu_\alpha$ is a non-decreasing (integer-valued) function of a real parameter $\alpha\ge 0$. Here, $h=\nu_0$ and $\nu=\nu_1$, while  the limiting value $\nu_\infty$ can be expressed in terms of the maximum citation. 

\subsection{\bf Disclaimer}

There have been a lot of discussions about the utility and limitations of the indexes $h$ and $g$ (see, e.g., \citealp{alonso2009h,{Costas2007},hirsch2007index, web-h,Thelwall2025, waltman2016}, and references therein), including their questionable predictive power. In this work, we only interpret citation indexes as a suitable characteristic of productivity. However, some further thoughts about the societal dimension of citation indexes will be added in the Conclusion section.

\section{The $h$ and $g$ indexes}

\subsection{\bf Notation}

Let us fix some  notation. Suppose that an author has published $m\ge1$ papers, with the ordered numbers of citations $x_1\ge 
\dots \ge x_m$, where $x_i\ge0$ are integers (possibly zero). We call  $\boldsymbol{x}=(x_1,
\dots,x_m)$ the \emph{citation vector}. 
A zero vector $\boldsymbol{0}=(0,\dots,0)$ represents the degenerate case of no citations (note that the variable dimension of this vector is determined by the number of published papers). 
Denote
\begin{equation}\label{eq:sk}
S_k=x_1+\dots+x_k,\qquad k=1,\dots,m.
\end{equation}
Clearly, $n=S_m$ is the total number of citations generated by the $m$ papers.
Furthermore, we write  
\begin{equation}\label{eq:m*}
m_{*\myn}(j)=\sum_{i=1}^m \boldsymbol{1}_{\{x_i\ge j\}}=\#\{x_i\ge j\}
\end{equation}
for the number of papers with at least $j$ citations each; here, 
$\boldsymbol{1}_{\myn A}=1$ if condition $A$ is satisfied and $\boldsymbol{1}_{\myn A}=0$ otherwise.

Following \cite{woeginger2008axiomatic}, we say that a vector $\boldsymbol{x}=(x_1,\dots,x_m)$ is \emph{dominated} by a vector $\boldsymbol{y}=(y_1,\dots,y_{\ell})$ (written as $\boldsymbol{x}\preccurlyeq\boldsymbol{y}$) if $x_i\le y_i$ for all $i\ge1$; more precisely, if $m\le \ell$ then $x_i\le y_i$ for $i\le m$, but if $m>\ell$ then $x_i\le y_i$ for $i\le \ell$ and $x_i=0$ for $\ell <i\le m$. Effectively, these two cases imply that we complement the absent components of either $\boldsymbol{x}$ or $\boldsymbol{y}$ with fictitious zeros to equalize their dimensions, and then the dominance holds component-wise.

\begin{remark}
The component-wise dominance $\boldsymbol{x}\preccurlyeq\boldsymbol{y}$ should not be confused with \emph{(weak) majorization} $\boldsymbol{x}\prec_w\boldsymbol{y}$, defined by the conditions $\sum_{i=1}^kx_i\le \sum_{i=1}^ky_i$, for all $k$ (see \citealp[pp.\ 11--12]{Marshall}). Like before, the lengths of vectors $\boldsymbol{x}$ and $\boldsymbol{y}$ are equalized by adding fictitious zeros as necessary. Clearly, if $\boldsymbol{x}\preccurlyeq\boldsymbol{y}$ then $\boldsymbol{x}\prec_w\boldsymbol{y}$, but not conversely. 
\end{remark}

\subsection{\bf Generic properties}

The following natural conditions are commonly assumed 
for any reasonable citation index $c(\boldsymbol{x})$ 
\citep{woeginger2008axiomatic} (cf.\ \citealp{woeginger2008axH}):
\begin{itemize}
\item[(C1)] If $\boldsymbol{x}=\boldsymbol{0}$ then $c(\boldsymbol{x})=0$.

\item[(C2)] If $\boldsymbol{x}=(x_1, 
\dots, x_m)$ and $\boldsymbol{y}=(x_1, 
\dots, x_m,0)$ then $c(\boldsymbol{x})=c(\boldsymbol{y})$.

\item[(C3)] 
If $\boldsymbol{x}\preccurlyeq \boldsymbol{y}$ then $c(\boldsymbol{x})\le c(\boldsymbol{y})$.
\end{itemize}

\begin{remark}
The majorization relation $\prec_w$ looks more flexible as a comparative tool. The corresponding version of property \textup{(C3)} is stated similarly:
\begin{itemize}
\item[\textup{(C3$'$)}]
\textup{If $\boldsymbol{x}\prec_w\boldsymbol{y}$ then $c(\boldsymbol{x})\le c(\boldsymbol{y})$.}
\end{itemize}
However (perhaps, surprisingly), the $h$-index does not satisfy \textup{(C3$'$)}: e.g., for $\boldsymbol{x}=(2,2)$ and $\boldsymbol{y}=(8,1)$ we have $\boldsymbol{x}\prec_w\boldsymbol{y}$ but $h(\boldsymbol{x})=2>h(\boldsymbol{y})=1$. On the other hand, the $g^*\!$-index (but not $g$) does satisfy \textup{(C3$'$)} (see definitions \eqref{eq:gindexDF} and \eqref{eq:gindex*} below); e.g., $g^*\mynn(\boldsymbol{x})=2<g^*\mynn(\boldsymbol{y})=3$. 
\end{remark}

\subsection{\bf Mathematical expressions and relations for $h$ and $g$}
Let us now recall the above verbal definitions of the $h$ and $g$ indexes and put them into an explicit mathematical formulation. 
Starting with the $h$-index, its definition can be expressed as follows,
\begin{equation}
    h\equiv h(\boldsymbol{x})=\max \left\{ j\ge 1 \colon\! \sum_{i=1}^{m} \boldsymbol{1}_{\{x_i\ge j\}}
    \ge j \right\}\!,
\label{eq:hindexDF}
\end{equation}
or, using notation \eqref{eq:m*}, \begin{equation}
 h=\max \bigl\{ j\ge 1 \colon m_{*\myn}(j)\ge j\bigr\}.
\label{eq:hindexDF*}
\end{equation}

Note that the maximum in \eqref{eq:hindexDF} is uniquely defined, since the sum on the left-hand side is a decreasing function of $j$, while the right-hand side of the testing inequality is strictly increasing.

In particular, noting that 
$m_{*\myn}(j)\le m$, it follows that the $h$-index is bounded by the number of papers:
\begin{equation*}
h\le m.
\end{equation*}
In the degenerate case with $x_i\equiv 0$ (i.e., $\boldsymbol{x}=\boldsymbol{0}$), the inequality in \eqref{eq:hindexDF} is only satisfied for the value $j=0$, which is excluded from the testing range; thus, the resulting set of suitable $j$'s is empty and, according to the common convention, its maximum is set to be zero: $\max\varnothing = 0$; hence
$h(\boldsymbol{0})=0$,
so that property (C1) is automatically satisfied. It is also easy to see that (C2) holds (because $h$ is insensitive to zero citations) and that (C3) is also true.


%
%
Next, 
the definition of the $g$-index can be written 
as follows,
\begin{equation}
\label{eq:gindexDF0}
g\equiv g(\boldsymbol{x})=\max\left\{1\le k\le m\colon \sum_{i=1}^{k}x_{i}\ge k^2\right\}\!,
\end{equation}
or, recalling notation \eqref{eq:sk},
\begin{equation}
\label{eq:gindexDF}
 g=\max\left\{1\le k\le m\colon S_k\ge k^2\right\}\!.
\end{equation}
Note that if $\boldsymbol{x}=\boldsymbol{0}$ (i.e., all $x_i=0$) then the set under the max-symbol is empty, in which case, by the same convention, we define the maximum as zero. That is to say, the $g$-index for the zero citation vector equals zero:
\begin{equation}\label{eq:g0}
g(\boldsymbol{0})=0.
\end{equation}
Also note that, because the testing range of $k$'s in  \eqref{eq:gindexDF} is bounded by $m$, we must have 
\begin{equation*}
g\le m.
\end{equation*}

It can be shown that the $g$-index is not smaller than the $h$-index of the same author \citep[Proposition I.2. p.\,133]{egghe2006theory}, 
\begin{equation*}
h\le g.   
\end{equation*} 
Indeed, if the $h$-index has value $h$ then there are $h$ papers with at least $h$ citations each, and therefore with at least $h\times h=h^2$ citations in total. Hence, the trial value $k=h$ satisfies the inequality condition in \eqref{eq:gindexDF}, which implies that $g\ge k=h$, as claimed. 


\subsection{\bf Auxiliary lemmas for sums}

According to definition \eqref{eq:gindexDF}, the index $g$ is the largest value of $k\le m$ for which $S_k\ge k^2$. But it may be unclear whether the inequality $S_k\ge k^2$ can fail for some $k<g$. Let us show that $S_k\ge k^2$ \emph{for all} $k\le g$. 

\begin{lemma}\label{lm:1}
If $S_k<k^2\mynn$ for some $k\ge 1$, then $S_{\ell}<\ell^2\mynn$ for all $\ell\ge k$. 
\end{lemma}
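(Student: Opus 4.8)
The plan is to prove the contrapositive-free statement directly by establishing that the "deficit" $k^2 - S_k$ cannot decrease (in fact grows) as $k$ increases past a point where it is already positive. It suffices to prove the one-step version: if $S_k < k^2$ for some $k\ge 1$, then $S_{k+1} < (k+1)^2$; the full claim for all $\ell\ge k$ then follows by an obvious induction on $\ell$. So fix $k$ with $S_k < k^2$ and consider $S_{k+1} = S_k + x_{k+1}$.

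The key observation is an a priori bound on $x_{k+1}$ coming from the ordering $x_1\ge\dots\ge x_m$. Since $x_{k+1}$ is no larger than each of $x_1,\dots,x_k$, we have $k\myp x_{k+1}\le x_1+\dots+x_k = S_k$, hence $x_{k+1}\le S_k/k$. (If $k+1>m$ there is nothing to prove, since then $S_{k+1}$ is not defined / the vector is padded with zeros and the inequality is only weakened; I would either restrict to $k+1\le m$ or note $x_{k+1}=0$ in the padded convention, making the step trivial.) Combining,
\begin{equation*}
S_{k+1} = S_k + x_{k+1} \le S_k + \frac{S_k}{k} = \frac{k+1}{k}\,S_k < \frac{k+1}{k}\cdot k^2 = k(k+1) < (k+1)^2,
\end{equation*}
where the strict inequality in the middle uses the hypothesis $S_k < k^2$, and the final step is $k(k+1) = (k+1)^2 - (k+1) < (k+1)^2$. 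This gives $S_{k+1} < (k+1)^2$, completing the inductive step.

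I do not expect any genuine obstacle here; the only point requiring a little care is the boundary case $k+1 > m$, i.e.\ when $k$ is already the last index. Under the paper's convention of appending fictitious zeros (so that $x_i=0$ for $i>m$ and $S_\ell$ is constant for $\ell\ge m$), the inequality $S_\ell<\ell^2$ for $\ell>m$ follows immediately because the left side is fixed at $n=S_m$ while $\ell^2$ keeps growing; alternatively one simply notes the statement is only asserted for indices in the admissible range and the induction runs within $\{k,\dots,m\}$. Either way the argument above covers every relevant step, so the lemma follows.
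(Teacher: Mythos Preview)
Your proof is correct and follows essentially the same approach as the paper: reduce to the one-step implication $S_k<k^2\Rightarrow S_{k+1}<(k+1)^2$ and then induct, using the ordering of the $x_i$ to bound $x_{k+1}$ and arrive at $S_{k+1}<k^2+k<(k+1)^2$. The only cosmetic difference is that the paper first deduces $x_k<k$ from $k\myp x_k\le S_k<k^2$ and then uses $x_{k+1}\le x_k$, whereas you bound $x_{k+1}\le S_k/k$ directly; both routes yield the same estimate.
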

\begin{proof}
Using that $x_k=\min\{x_1,\dots,x_k\}$, we have 
$$
k^2\mynn>S_k=x_1+\dots+x_k\ge k\myp x_k,
$$
which implies that $k>x_k\ge x_{k+1}$. Hence,  
$$
S_{k+1}=S_k+x_{k+1}<k^2+k<(k+1)^2,
$$
that is, $S_{k+1}<(k+1)^2\myn$. The general claim then follows by induction. 
\end{proof}

\begin{lemma}\label{lm:2}
If $S_k<k^2\mynn$ then $g<k$. In particular, 
$g<m$ or $g=m$ according as $S_m<m^2\mynn$ or $S_m\ge m^2\mynn$, respectively.
\end{lemma}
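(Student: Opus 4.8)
The plan is to read off the first assertion directly from Lemma~\ref{lm:1} together with the definition of $g$ in \eqref{eq:gindexDF}, and then to obtain the dichotomy by specializing to $k=m$.

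First I would assume that $S_k<k^2$ for some $k\ge1$. By Lemma~\ref{lm:1} this forces $S_{\ell}<\ell^2$ for \emph{every} $\ell\ge k$; in particular, no index $\ell$ in the range $k\le\ell\le m$ can satisfy the testing inequality $S_{\ell}\ge\ell^2$ in \eqref{eq:gindexDF}. Hence the set $\{1\le\ell\le m\colon S_{\ell}\ge\ell^2\}$ is contained in $\{1,\dots,k-1\}$, and so its maximum — with the usual convention $\max\varnothing=0$ should this set be empty — is strictly smaller than $k$. This yields $g<k$.

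For the ``in particular'' clause, I would apply the first part with $k=m$: if $S_m<m^2$ then immediately $g<m$. Conversely, if $S_m\ge m^2$ then the value $k=m$ itself satisfies the condition in \eqref{eq:gindexDF}, so $g\ge m$; combined with the bound $g\le m$ recorded just after \eqref{eq:gindexDF}, this gives $g=m$. Since exactly one of $S_m<m^2$ and $S_m\ge m^2$ holds, the claimed alternative follows.

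I do not expect a real obstacle here, since everything reduces to the monotonicity already established in Lemma~\ref{lm:1}; the only point worth a line of care is the degenerate situation in which the defining set for $g$ is empty (e.g.\ $\boldsymbol{x}=\boldsymbol{0}$, where $g(\boldsymbol{0})=0$), where one simply notes that $0<k$ because $k\ge1$, so the inequality $g<k$ still holds.
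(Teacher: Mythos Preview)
Your argument is correct and follows exactly the paper's approach, which simply records that the claim ``readily follows by Lemma~\ref{lm:1} and definition \eqref{eq:gindexDF}''; you have merely spelled out the details, including the ``in particular'' clause and the degenerate empty-set case, that the paper leaves implicit.
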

\begin{proof} Readily follows by Lemma \ref{lm:1} and definition \eqref{eq:gindexDF}.
\end{proof}

\subsection{\bf 
The unconstrained index $g^*$}
Turning to the verification of the required properties (C1)--(C3) for the $g$-index, we see that (C1) automatically holds due to \eqref{eq:g0}. It is also easy to see that (C3) holds as well. However, the result of Lemma \ref{lm:2} suggests, a bit surprisingly, that property (C2) may fail. For instance, for $\boldsymbol{x}=(4)$ we have $g(\boldsymbol{x})=1$, but for $\boldsymbol{y}=(4,0)$
definition \eqref{eq:gindexDF} yields $g(\boldsymbol{y})=2$.

To salvage (C2), and also to amplify the role of top-cited papers, it was suggested \citep{egghe2006theory, woeginger2008axiomatic} to lift the constraint $k\le m$ in the definition of the $g$-index (see \eqref{eq:gindexDF}) by complementing the citation vector $\boldsymbol{x}=(x_1,\dots,x_m)$ with  additional zeros, as if such fictitious papers have been published but generated no citations: $\boldsymbol{x}'=(x_i')=(x_1,\dots,x_m,0,\dots)$. We denote this version of $g$ by $g^*\mynn$:
\begin{equation}
\label{eq:gindex*}
g^*\mynn\equiv g^*\mynn(\boldsymbol{x})=\max\left\{k\ge 1\colon \sum_{i=1}^{k}x'_{i}\ge k^2\right\}
\end{equation}
or, equivalently,
\begin{equation}
\label{eq:gindex**}
g^*\mynn=\max\bigl\{k\ge 1\colon S_k\ge k^2\bigr\},
\end{equation}
where we define $S_k=S_m$ for all $k\ge m$.

Comparing definitions \eqref{eq:gindexDF} and \eqref{eq:gindex*}, we see that 
\begin{equation*}
g\le g^*\mynn,
\end{equation*}
and moreover, if $g<m$ then $g=g^*\mynn$. However, the case where $g=m$ may be drastically different. 
\begin{lemma}
Suppose that $S_m\ge (m+1)^2\myn$. Then $g=m$ but $g^*=\displaystyle \bigl\lfloor\sqrt{S_m}\bigr\rfloor\ge m+1$.
\end{lemma}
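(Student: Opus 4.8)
The plan is to unwind the definitions \eqref{eq:gindexDF} and \eqref{eq:gindex**} under the hypothesis $S_m\ge (m+1)^2$. First I would deal with the ordinary $g$-index: since $S_m\ge (m+1)^2 > m^2$, the trial value $k=m$ satisfies $S_m\ge m^2$, and as $k=m$ is the largest admissible value in \eqref{eq:gindexDF}, we immediately get $g=m$. (This is also just Lemma \ref{lm:2}, the ``$g=m$'' alternative.)

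For $g^*$ I would work with the equivalent form \eqref{eq:gindex**}, recalling that there $S_k=S_m$ for all $k\ge m$. So for $k\ge m$ the testing inequality $S_k\ge k^2$ reads $S_m\ge k^2$, i.e.\ $k\le \sqrt{S_m}$, equivalently $k\le \lfloor\sqrt{S_m}\rfloor$ since $k$ is an integer. Set $K:=\lfloor\sqrt{S_m}\rfloor$. The hypothesis $S_m\ge(m+1)^2$ gives $\sqrt{S_m}\ge m+1$, hence $K\ge m+1>m$; in particular every $k$ in the range $m\le k\le K$ satisfies $S_k=S_m\ge k^2$. Thus the set $\{k\ge 1\colon S_k\ge k^2\}$ contains $K$, so $g^*\ge K$. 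Conversely, I would argue $g^*\le K$: by Lemma \ref{lm:1} (or directly), once the inequality $S_k\ge k^2$ is examined for $k\ge m$ it fails for every $k>K$ because then $k^2>S_m=S_k$; and for $k\le m$ we have $k\le m<K$ anyway, so no such $k$ can exceed $K$. Hence $g^*=K=\lfloor\sqrt{S_m}\rfloor\ge m+1$, completing the proof.

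There is essentially no obstacle here — it is a direct computation — but the one point deserving a line of care is making sure the claimed value $\lfloor\sqrt{S_m}\rfloor$ is actually attained, i.e.\ that $S_K\ge K^2$. This is clear because $K\ge m+1>m$ forces $S_K=S_m$, and $K=\lfloor\sqrt{S_m}\rfloor$ means $K^2\le S_m=S_K$. One should also note in passing that the maximum in \eqref{eq:gindex**} is well defined (the set is nonempty, e.g.\ $k=1$ lies in it whenever $x_1\ge 1$, which holds here since $S_m\ge (m+1)^2\ge 4$), and bounded above by $K$, so the max exists and equals $K$.
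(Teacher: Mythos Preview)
Your proof is correct and follows essentially the same approach as the paper: both reduce the computation of $g^*$ to the observation that, once $k\ge m$, the condition $S_k\ge k^2$ becomes $S_m\ge k^2$, and then pin down $g^*=\lfloor\sqrt{S_m}\rfloor$. The only cosmetic difference is that the paper argues via the two defining inequalities $S_{g^*}=S_m\ge (g^*)^2$ and $S_{g^*+1}=S_m<(g^*+1)^2$ to get $\sqrt{S_m}-1<g^*\le\sqrt{S_m}$, whereas you explicitly verify $g^*\ge K$ and $g^*\le K$ for $K=\lfloor\sqrt{S_m}\rfloor$; the content is the same.
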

\begin{proof} 
Note that $g=m$ by Lemma \ref{lm:2}. 
By definition \eqref{eq:gindex**} we have 
$$
S_{g^*}\mynn=S_m \ge (g^*)^2,\qquad 
S_{g^*\myn+1}=S_m <(g^*\!+1)^2.
$$
In turn, this  implies  
$$
\sqrt{S_m}-1<g^*\!\le \sqrt{S_m},
$$ 
that is,  $g^*\!=\displaystyle \bigl\lfloor\sqrt{S_m}\bigr\rfloor\ge m+1$, as claimed.
\end{proof}

For example, for $\boldsymbol{x}=(5,4)$ we have $m=2$, $S_m=9$, $g=2$ and $g^*\!=3$. A striking real-life example illustrating this situation is the case of John Nash (see \citealp{woeginger2008axiomatic}), with the (rounded) citation vector
$\boldsymbol{x}=(2000, 2000, 1500, 1000, 400, 250, 100, 100)$, for which we get $g=8$ but $g^*\!=85$.

\section{An alternative citation index $\nu$}
\subsection{\bf Idea and definitions}

Trying to reconcile the definition of the $h$-index given by formula  \eqref{eq:hindexDF}, with the definition of the $g$-index in \eqref{eq:gindexDF0} by taking into account the actual citations of the top papers, we propose a new citation index called  the \emph{$\nu$-index}, defined as the maximum integer $\nu$ such that the total sum of citation counts of papers with at least $\nu$ citations each is not less than $\nu^2\myn$. Mathematically, this is expressed as (cf.\ \eqref{eq:hindexDF})
\begin{equation}
  \nu\equiv \nu(\boldsymbol{x})=\max\left\{j\ge 1\colon \sum_{i=1}^{m} x_i\myp \boldsymbol{1}_{\{x_i\ge j\}}\ge j^2 \right\}\!,
\label{eq:gindexDF-old}
\end{equation}
or, equivalently,
\begin{equation}
  \nu=\max\bigl\{j\ge 1\colon
S_{m_{*\myn}(j)}\ge j^2 \bigr\}.
\label{eq:gindexDF-old*}
\end{equation}%

Similarly to \eqref{eq:hindexDF}, the maximum is uniquely defined, noting that the sum in \eqref{eq:gindexDF-old}
is a decreasing function of $j$, while the right-hand side is strictly increasing. It is also worth pointing out that, unlike $g$ vs.\ $g^*\mynn$, the $\nu$-index is insensitive to fictitious zeros. 

Simple examples show that the value of $\nu$ may be larger than the total number of papers, in contrast with the $h$ and $g$ indexes. For instance, for $\boldsymbol{x}=(9,7
,1)$ we get $\nu(\boldsymbol{x})=4>3$. 

Clearly, this occurs because our definition of $\nu$ gives prominence to few highly cited papers. If unwanted, this can be suppressed by modifying the definition via an explicit constraint $\nu\le m$:
\begin{equation}\label{eq:nu-bar}
 \bar{\nu}\equiv \bar{\nu}(\boldsymbol{x})=\max\left\{1\le j\le m\colon \sum_{i=1}^{m} x_i\myp \boldsymbol{1}_{\{x_i\ge j\}}\ge j^2 \right\}\!,
 \end{equation}
or, equivalently,
\begin{equation*}
 \bar{\nu}=\max\bigl\{1\le j\le m\colon S_{m_{*\myn}(j)}\ge j^2 \bigr\}.
\end{equation*}
We call $\bar{\nu}$ a \emph{tempered $\nu$-index}.

\subsection{\bf Checking the basic properties}
\begin{lemma}
    The indexes $\nu$ and $\bar{\nu}$ satisfy the basic properties \textup{(C1)--(C3)}.
\end{lemma}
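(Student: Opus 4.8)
The plan is to verify each of (C1), (C2), (C3) for both $\nu$ and $\bar\nu$, treating the two indexes in parallel since their definitions differ only by the cosmetic constraint $j\le m$. For (C1), I would take $\boldsymbol{x}=\boldsymbol{0}$: then $x_i\myp\boldsymbol{1}_{\{x_i\ge j\}}=0$ for every $i$ and every $j\ge1$, so the testing inequality $\sum_i x_i\myp\boldsymbol{1}_{\{x_i\ge j\}}\ge j^2$ reads $0\ge j^2$, which fails for all $j\ge1$; the defining set is empty, and by the convention $\max\varnothing=0$ we get $\nu(\boldsymbol{0})=\bar\nu(\boldsymbol{0})=0$. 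This is immediate and mirrors the argument already given for $h$.

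For (C2), append a zero to obtain $\boldsymbol{y}=(x_1,\dots,x_m,0)$. The key observation is that the extra component contributes $0\cdot\boldsymbol{1}_{\{0\ge j\}}=0$ to the sum in \eqref{eq:gindexDF-old} for every $j\ge1$, so the left-hand side of the testing inequality is unchanged, and hence the defining set of admissible $j$ is literally the same for $\boldsymbol{x}$ and $\boldsymbol{y}$; thus $\nu(\boldsymbol{x})=\nu(\boldsymbol{y})$. For $\bar\nu$ one must additionally check that enlarging the upper bound from $m$ to $m+1$ does not change the maximum: this holds because the testing inequality at $j=m+1$ would require $S_{m_{*\myn}(m+1)}\ge (m+1)^2$, but $m_{*\myn}(m+1)\le m$ and each $x_i\le S_m$, so at best the left side is $S_m$; more simply, $\bar\nu\le\nu$ always and $\bar\nu(\boldsymbol{x})\le m<m+1$, so the added slot $j=m+1$ is vacuous and $\bar\nu(\boldsymbol{y})=\bar\nu(\boldsymbol{x})$. (This is precisely the insensitivity-to-fictitious-zeros remark already made in the text, so it can be invoked.)

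For (C3), suppose $\boldsymbol{x}\preccurlyeq\boldsymbol{y}=(y_1,\dots,y_\ell)$, so after padding with zeros $x_i\le y_i$ for all $i$. I would show that every $j$ admissible for $\boldsymbol{x}$ is admissible for $\boldsymbol{y}$, which gives $\nu(\boldsymbol{x})\le\nu(\boldsymbol{y})$. Fix such a $j$, i.e.\ $\sum_i x_i\myp\boldsymbol{1}_{\{x_i\ge j\}}\ge j^2$. The cleanest route is via \eqref{eq:gindexDF-old*}: the quantity $\sum_i x_i\myp\boldsymbol{1}_{\{x_i\ge j\}}$ equals $S_{m_{*\myn}(j)}$, the sum of the $m_{*\myn}(j)$ largest citations that are $\ge j$, which I would argue is monotone under $\preccurlyeq$. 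Concretely, since $x_i\le y_i$ we have $m_{*\myn}(j;\boldsymbol{x})\le m_{*\myn}(j;\boldsymbol{y})$ (componentwise domination increases the count of entries $\ge j$), and $x_i\myp\boldsymbol{1}_{\{x_i\ge j\}}\le y_i\myp\boldsymbol{1}_{\{y_i\ge j\}}$ term by term — because if $x_i\ge j$ then $y_i\ge x_i\ge j$ so both indicators fire and $x_i\le y_i$; if $x_i<j$ the left term is $0$ and the right term is $\ge0$. Summing over $i$ gives $j^2\le\sum_i x_i\myp\boldsymbol{1}_{\{x_i\ge j\}}\le\sum_i y_i\myp\boldsymbol{1}_{\{y_i\ge j\}}$, so $j$ is admissible for $\boldsymbol{y}$, as required. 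For $\bar\nu$ the same term-by-term inequality works, but one must be slightly careful about the upper constraint: if $m\le\ell$ the admissible range only widens, and if $m>\ell$ then $x_i=0$ for $i>\ell$, so $\bar\nu(\boldsymbol{x})=\nu(\boldsymbol{x})$ coincides with the unconstrained value on a vector of effective length $\le\ell$, and monotonicity follows as before; I would spell this small case distinction out.

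I do not expect a genuine obstacle here — all three properties reduce to the monotonicity of the map $j\mapsto\sum_i x_i\myp\boldsymbol{1}_{\{x_i\ge j\}}$ in the citation vector and its insensitivity to appended zeros. The only place demanding a little care is the bookkeeping in (C3) for $\bar\nu$ when the two vectors have different lengths and the padding convention in the definition of $\preccurlyeq$ comes into play; the term-by-term bound $x_i\myp\boldsymbol{1}_{\{x_i\ge j\}}\le y_i\myp\boldsymbol{1}_{\{y_i\ge j\}}$ handles it, but it is worth writing the two cases $m\le\ell$ and $m>\ell$ explicitly so the padding with fictitious zeros is seen to cause no trouble.
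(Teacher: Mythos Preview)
Your treatment of $\nu$ is correct and matches the paper's (very brief) argument: (C1) and (C2) hold because appended zeros contribute nothing to the defining sum, and (C3) follows from the term-by-term inequality $x_i\,\boldsymbol{1}_{\{x_i\ge j\}}\le y_i\,\boldsymbol{1}_{\{y_i\ge j\}}$, which you spell out more carefully than the paper does.

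For $\bar\nu$, however, your (C2) argument contains a real gap, and the step cannot be repaired because the assertion is actually false. You claim that ``the added slot $j=m+1$ is vacuous'', but the testing inequality at $j=m+1$ reads $\sum_i x_i\,\boldsymbol{1}_{\{x_i\ge m+1\}}\ge(m+1)^2$, and this can certainly hold when one citation is large. Concretely, for $\boldsymbol{x}=(9)$ we have $m=1$ and $\bar\nu(\boldsymbol{x})=1$ (only $j=1$ is in range), whereas for $\boldsymbol{y}=(9,0)$ the value $j=2$ is now in range and $9\ge 4$, so $\bar\nu(\boldsymbol{y})=2\ne\bar\nu(\boldsymbol{x})$. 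This is exactly the pathology the paper already identified for $g$: capping $j\le m$ makes the index sensitive to fictitious zeros. The same counterexample shows (C3) fails too, since $(9,0)\preccurlyeq(9)$ in the paper's sense but $\bar\nu(9,0)=2>1=\bar\nu(9)$; this breaks your $m>\ell$ case, where the assertion ``$\bar\nu(\boldsymbol{x})=\nu(\boldsymbol{x})$'' is not justified. The paper's own one-line proof (``$\bar\nu$ is insensitive to zero values $x_i=0$'') overlooks the same point, so the lemma as stated appears to hold only for $\nu$, not for $\bar\nu$.
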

\begin{proof}
Properties (C1) and (C2) are straightforward, since $\nu$ and $\bar{\nu}$ are insensitive to zero values $x_i=0$. Monotonicity (C3) is also obvious because the sums in \eqref{eq:gindexDF-old} and \eqref{eq:nu-bar} are monotone increasing in each component $x_i$. 
\end{proof}

The $\nu$-index combines the features of both the $h$-index and the $g$-index. It takes into account citations that are equal to or greater than a minimum threshold value of $\nu$ as in the $h$-index, while also including higher citations as in the $g$-index. This ensures that the $\nu$-index captures the impact of highly cited papers and provides a more balanced picture of their overall scholarly impact. In particular, it may be expected that the $\nu$-index interpolates between $h$ and $g$. The next result supports this conjecture.

\subsection{\bf Main result -- ordering relations between the indexes}

\begin{theorem}\label{th:1}
The citation indexes $h$, $\nu$, $\bar{\nu}$, $g$ and $g^*\!$ are in the following ordering relations:
\begin{equation}
\label{eq:hgnu}
h \le \nu \le g^*\mynn,
\qquad h \le \bar{\nu} \le g.
\end{equation}
\end{theorem}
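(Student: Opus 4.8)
The plan is to establish the four inequalities in \eqref{eq:hgnu} separately, exploiting the parallel structure of the definitions \eqref{eq:hindexDF}, \eqref{eq:gindexDF-old}, \eqref{eq:nu-bar}, \eqref{eq:gindexDF}, \eqref{eq:gindex**}. For $h\le\nu$: if $j=h$ then by \eqref{eq:hindexDF*} there are $m_{*}(h)\ge h$ papers with at least $h$ citations each, so their total $S_{m_{*}(h)}\ge h\cdot h=h^2$; hence $j=h$ satisfies the testing inequality in \eqref{eq:gindexDF-old*}, giving $\nu\ge h$. For $\nu\le g^*$: taking $j=\nu$, the sum $\sum_{i=1}^m x_i\,\boldsymbol1_{\{x_i\ge\nu\}}=S_{m_{*}(\nu)}$ is a partial sum of the ordered citation vector over the first $m_{*}(\nu)$ entries, so $S_{m_{*}(\nu)}\le S_{\nu}$ if $m_{*}(\nu)\le\nu$, and in general $S_{m_{*}(\nu)}\le S_{k}$ where $k=\min\{m_{*}(\nu),\nu\}$ — but actually the cleaner route is: $\nu^2\le S_{m_{*}(\nu)}\le S_{\nu}$ provided $m_{*}(\nu)\le\nu$, and one checks $m_{*}(\nu)\le\nu$ holds because papers with fewer than $\nu$ citations contribute nothing yet each counted paper contributes at most\ldots hmm, this needs care. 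Safer: note $S_{m_{*}(\nu)}\ge\nu^2$ and $S_{m_{*}(\nu)}\le S_{\max\{m_{*}(\nu),\,\cdot\}}$; since $S_k$ is non-decreasing in $k$ (with the convention $S_k=S_m$ for $k\ge m$ as in \eqref{eq:gindex**}), we have $S_{\max(m_{*}(\nu),\nu)}\ge S_{m_{*}(\nu)}\ge\nu^2$, but we want $S_\nu\ge\nu^2$. The key observation is that for $i\le m_{*}(\nu)$ we have $x_i\ge\nu$, so if $m_{*}(\nu)<\nu$ then $S_{m_{*}(\nu)}\le m_{*}(\nu)\cdot x_1$ need not help; instead use $S_\nu\ge S_{m_{*}(\nu)}$ when $m_{*}(\nu)\le\nu$, and when $m_{*}(\nu)>\nu$ then $S_\nu\ge\nu\cdot x_\nu\ge\nu\cdot\nu=\nu^2$ since $x_\nu\ge\nu$ (as $\nu\le m_{*}(\nu)$ means the $\nu$-th paper has at least $\nu$ citations). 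Either way $S_\nu\ge\nu^2$, so by \eqref{eq:gindex**} we get $g^*\ge\nu$.

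For $h\le\bar\nu$: the inequality $S_{m_{*}(h)}\ge h^2$ from the first part holds, and $h\le m$, so $j=h$ lies in the constrained testing range $1\le j\le m$ of \eqref{eq:nu-bar}, giving $\bar\nu\ge h$. For $\bar\nu\le g$: taking $j=\bar\nu\le m$, we have $S_{m_{*}(\bar\nu)}\ge\bar\nu^2$, and repeating the argument of the previous paragraph (now all partial-sum indices are genuinely $\le m$), we get $S_{\bar\nu}\ge\bar\nu^2$ with $\bar\nu\le m$, so $\bar\nu$ satisfies the condition in \eqref{eq:gindexDF}, whence $g\ge\bar\nu$. Finally, I would remark that $\bar\nu\le\nu$ is immediate from comparing \eqref{eq:nu-bar} with \eqref{eq:gindexDF-old} (a maximum over a smaller set), so the two chains are consistent with the known $g\le g^*$.

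The main obstacle, and the step I would write out most carefully, is the passage from $S_{m_{*}(j)}\ge j^2$ to $S_j\ge j^2$ in the two upper-bound inequalities $\nu\le g^*$ and $\bar\nu\le g$. This hinges on the case split according to whether $m_{*}(j)$ is at most or greater than $j$: in the first case monotonicity of $S_k$ in $k$ does the job directly, while in the second case one uses that $x_j\ge j$ (because at least $j$ papers have $\ge j$ citations each) to get $S_j\ge j\myp x_j\ge j^2$. I would state this as a short auxiliary observation — essentially that $S_{m_{*}(j)}\ge j^2$ always implies $S_{\min\{m_{*}(j),\,m\}}\ge\cdots$, but cleaner to just do the two-line case analysis inline. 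Everything else is bookkeeping: checking that the trial values land in the correct testing ranges and invoking the definitions \eqref{eq:hindexDF*}, \eqref{eq:gindexDF}, \eqref{eq:gindex**}, \eqref{eq:gindexDF-old*}, \eqref{eq:nu-bar}. Lemmas~\ref{lm:1} and~\ref{lm:2} are not strictly needed here but could be cited to reinforce that $S_k\ge k^2$ propagates downward for $k\le g$.
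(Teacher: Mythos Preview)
Your proposal is correct and, once the exploratory detours are stripped away, follows exactly the same route as the paper: for $h\le\nu$ you show $S_{m_{*}(h)}\ge h\cdot h=h^2$, and for $\nu\le g^{*}$ you do the same two-case split on whether $m_{*}(\nu)\le\nu$ (monotonicity of $S_k$, with the $S_k=S_m$ convention when $\nu>m$) or $m_{*}(\nu)\ge\nu$ (so $x_\nu\ge\nu$ and $S_\nu\ge\nu^2$), with the $\bar\nu$ chain handled identically inside the range $1\le j\le m$. The paper presents this without the false starts; in a final write-up you should simply state the case split directly rather than narrating the search for it.
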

\begin{proof} We only prove the inequalities for $\nu$;
the proof 
for $\bar{\nu}$ is similar. 
First, by definition of $h$ in \eqref{eq:hindexDF} and \eqref{eq:hindexDF*} we can write
$$
h\le m_{*\myn}(h)=\sum_{i=1}^m \boldsymbol{1}_{\{x_i\ge h\}}
\le\frac{1}{h} \sum_{i=1}^m x_i\myp \boldsymbol{1}_{\{x_i\ge h\}}=\frac{1}{h}S_{m_{*\myn}(h)}.
$$
Hence,
$$
S_{m_{*\myn}(h)}=\sum_{i=1}^m x_i\myp \boldsymbol{1}_{\{x_i\ge h\}}\ge h^2,
$$
which implies, according to \eqref{eq:gindexDF-old*}, that $\nu\ge h$, as claimed.


Next, 
the maximizing sum in \eqref{eq:gindexDF-old} 
is expressed as 
\begin{equation}\label{eq:J}
\nu^2\mynn \le \sum_{i=1}^{m} x_i\myp \boldsymbol{1}_{\{x_i\ge \nu\}}=S_{m_{*\myn}(\nu)},
\end{equation}
thus involving $m_{*\myn}(\nu)$ terms $x_{i}$
satisfying the inequality \begin{equation}\label{eq:>ell}
x_{i}\ge \nu, \qquad i=1,\dots,m_{*\myn}(\nu).
\end{equation}
If $m_{*\myn}(\nu)\le \nu$ then from \eqref{eq:J} we obtain (adding fictitious zeros if $\nu>m$)
$$
\nu^2\mynn\le S_{m_{*\myn}(\nu)}\le S_{\nu},
$$
and it follows from  definition \eqref{eq:gindex**} that $g^*\!\ge \nu$. Alternatively, if $m_{*\myn}(\nu)\ge \nu$ then, using \eqref{eq:>ell}, we can write
$$
S_{m_{*\myn}(\nu)}\ge S_\nu\ge \nu^2\mynn,
$$
and, as before, it follows  that $g^*\!\ge \nu$.
\end{proof}

\subsection{\bf R code and some simple examples}

A simple R code to calculate various indexes is given below:

\medskip

\begin{minipage}{\hsize}%
\lstset{language=R} \begin{lstlisting}[frame=single,framexleftmargin=-1pt,framexrightmargin=-17pt,framesep=12pt,linewidth=0.98\textwidth,showstringspaces=false]
# indexes h, nu, nu.bar, g, g.star
ind <- function(x) # x = input citation vector
{ x <- sort(x, decreasing = TRUE)  # ordering 
  m <- length(x)  # number of papers
# h
  h <- 0
  while (h < length(x) && x[h + 1] >= h + 1) 
  { h <- h + 1
  }
\end{lstlisting}
\end{minipage}

\begin{minipage}{\hsize}%
\lstset{language=R} \begin{lstlisting}[frame=single,framexleftmargin=-1pt,framexrightmargin=-17pt,framesep=12pt,linewidth=0.98\textwidth,showstringspaces=false]
# nu
  nu <- 0
  while (sum(x[which(x >= (nu + 1))]) 
         >= (nu + 1)^2) 
  { nu <- nu + 1
  }
# nu.bar
  nu.bar <- min(nu,m) 
# g
  g <- max(which(cumsum(x) >= (1:m)^2))
  # g.star 
  if (sum(x) >= m^2)
  { g.star <- floor(sqrt(sum(x)))
  } 
  else 
  { g.star <- max ( which (cumsum(x) >= (1:m)^2))
  }
}# Printing the output:
cat("x =","(",x,");", "\n")
cat("h =", h, "nu.bar =", nu.bar, "nu =", nu, 
     "g =", g, "g.star =", g.star)
\end{lstlisting}
\end{minipage}

\bigskip
\noindent
{\em Example} (John Nash case):

\medskip
\begin{minipage}{\hsize}
\lstset{language=R} \begin{lstlisting}[frame=single,framexleftmargin=-1pt,framexrightmargin=-17pt,framesep=12pt,linewidth=0.98\textwidth]
x <- c(2000,2000,1500,1000,400,250,100,100)
ind(x)
# x = ( 2000 2000 1500 1000 400 250 100 100 ); 
# h = 8 nu.bar = 8 nu = 85 g = 8 g.star = 85
\end{lstlisting}
\end{minipage}

\medskip\bigskip\noindent
The following Table \ref{tab:example} presents the various citation indexes for a few simple examples.
\begin{table}[h!]
\caption{Illustrative examples of different indexes.}       \label{tab:example}
        \centering
        \begin{tabular}{|l||c|c|c|c|c|}
            \hline
    $\boldsymbol{x}=(x_1,\dots,x_m)$ & $h$& $\bar{\nu}$ & $\nu$ & $g$ &$\textstyle g^*$ \\
            \hline
 $(3,2,2,2)$ & 2&2&2&2&2\\
            $(12,3,1)$ &2 & 3&3 &3 &4\\
           $(12,3,1,0)$ &2 & 3&3 &4 &4\\
          $(6,3,1,0)$ & 2 & 3 & 3&3&3
 \\          $(5,3,2,1)$ & 2 & 2 & 2&3&3\\       
   $(8,1,1)$ &1 &2 &2 & 3&3\\
          $(8,4,3,2,1)$ &3 &3 &3 & 4&4 \\
          $(18,18,1,1)$ & 2& 4&6 &4 &6\\
          $(20, 20, 18, 6,1,0)$& 4 & 6& 7& 6& 8 \\
          \hline
        \end{tabular}
     \end{table}

\subsection{\bf Cases of equality}
One observation from Table \ref{tab:example} is that, occasionally, some of the indexes may coincide, which warrants a question of exploring the cases of equalities in \eqref{eq:hgnu}. The possible equality $h=g^*\!$ was addressed by \citet{Egghe2019}.

\begin{theorem}
\label{th:2}
The equalities in the index inequalities \eqref{eq:hgnu} of Theorem \ref{th:1} hold if and only if the following conditions are satisfied, respectively\/\textup{:}

\medskip
\begin{tabular}{rll}
{\rm (a)}&\hspace{-.3pc} $h=\nu\colon$& $S_{m_{*\myn}(h+1)}<(h+1)^2;$ \\[.4pc]
{\rm (b)}&\hspace{-.3pc} $\nu=g^*\!{\colon}$& $S_{\nu+1}<(\nu+1)^2;$ \\[.4pc]
{\rm (c)}&\hspace{-.3pc} $h=\bar{\nu}\colon$& $h=m$\textup{,} or $h<m$ and $S_{m_{*\myn(h+1)}}<(h+1)^2;$ \\[.4pc]
{\rm (d)}&\hspace{-.3pc} $\nu=g\colon$& $\nu=m$\textup{,} or $\nu<m$ and $S_{\nu+1}<(\nu+1)^2.$
\end{tabular}
%
%
%

\end{theorem}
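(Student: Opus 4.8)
The plan is to exploit a feature shared by all five indexes: each is the maximum of an \emph{admissible set} of positive integers of the form $\{j\ge1:T(j)\ge\phi(j)\}$, in which the test function $T$ is non-increasing in $j$ while the threshold $\phi(j)$ (equal to $j$ for $h$, and to $j^2$ for $\nu,\bar\nu,g,g^*$) is strictly increasing; for $\bar\nu$ and $g$ the range of $j$ is moreover capped at $m$. My first step is to record that each such admissible set is an \emph{interval} $\{1,2,\dots,c\}$, where $c$ is the corresponding index. (The degenerate case $\boldsymbol{x}=\boldsymbol{0}$, where every index equals $0$ and, with the conventions $S_0=0$ and $\max\varnothing=0$, every claimed equivalence holds, is dispatched directly and set aside.) For $h$ this is immediate from \eqref{eq:hindexDF*}, since $j\mapsto m_{*}(j)$ is non-increasing; for $\nu$ and $\bar\nu$ it follows from \eqref{eq:gindexDF-old*} and \eqref{eq:nu-bar}, because $j\mapsto m_{*}(j)$ non-increasing makes $j\mapsto S_{m_{*}(j)}$ non-increasing, so that $S_{m_{*}(j)}\ge j^2$ forces $S_{m_{*}(j-1)}\ge S_{m_{*}(j)}\ge j^2>(j-1)^2$. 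For $g^*$ the monotonicity of $S_k$ does not by itself suffice, but Lemma~\ref{lm:1} does: its statement carries over verbatim to the padded sums $S_k=S_m$ ($k\ge m$), whence $\{k\ge1:S_k\ge k^2\}=\{1,\dots,g^*\}$. Intersecting this with $\{1,\dots,m\}$ yields the identity $g=\min(g^*,m)$, and likewise $\bar\nu=\min(\nu,m)$ straight from the definitions.

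With the interval property in hand, cases (a) and (b) are essentially immediate. For (a): since $\nu\ge h$ always (the proof of Theorem~\ref{th:1} shows $j=h$ belongs to $\nu$'s admissible set), we have $h=\nu$ iff $h+1$ lies \emph{outside} the admissible set $\{1,\dots,\nu\}$, i.e.\ iff $S_{m_{*}(h+1)}<(h+1)^2$. For (b): since $\nu\le g^*$ always, we have $\nu=g^*$ iff $\nu+1$ lies outside $\{1,\dots,g^*\}$, i.e.\ iff $S_{\nu+1}<(\nu+1)^2$.

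Cases (c) and (d) call for a short case split around the caps at $m$. For (c), start from $h\le\bar\nu\le m$: if $h=m$ the two are squeezed equal, which is the first alternative; if $h<m$, then $\bar\nu=\min(\nu,m)$ equals $h$ precisely when $\nu=h$ (for $\nu\ge m$ would give $\bar\nu=m>h$), and by (a) this is equivalent to $S_{m_{*}(h+1)}<(h+1)^2$, the converse being automatic since that inequality already forces $\nu=h<m$ and hence $\bar\nu=h$. For (d), use $g=\min(g^*,m)$ and split on the position of $\nu$ relative to $m$: if $\nu>m$ then $g\le m<\nu$ and both sides of the equivalence fail; if $\nu=m$ then $g=\min(g^*,\nu)=\nu$ because $\nu\le g^*$, which is the first alternative; if $\nu<m$ then $\nu\le\min(g^*,m)=g$, and $\nu=g$ forces $g^*=\nu$ (since $g^*\ge m$ would give $g=m\ne\nu$), which by (b) is equivalent to $S_{\nu+1}<(\nu+1)^2$, while conversely $g^*=\nu$ together with $\nu<m$ gives $g=\nu$. (Note that (d) concerns $\nu=g$: the inequality $\nu\le g$ is \emph{not} guaranteed in general --- e.g.\ for $\boldsymbol{x}=(18,18,1,1)$ one has $\nu=6>g=4$ --- so (d) pins down exactly when this additional coincidence occurs.)

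I expect the principal difficulty to be bookkeeping rather than conceptual depth: keeping the biconditionals tight in (c) and (d), in particular checking that the side conditions ``$h<m$'' and ``$\nu<m$'' are genuinely \emph{forced} by the corresponding sum inequalities (and not merely consistent with them), and handling $g=\min(g^*,m)$ and $\bar\nu=\min(\nu,m)$ jointly with the known ordering $h\le\nu\le g^*$ from Theorem~\ref{th:1}. The one point that deserves to be made explicit --- and is easy to overlook --- is the extension of Lemma~\ref{lm:1} to the padded partial sums appearing in the definition of $g^*$, which is precisely what licenses treating the $g^*$-admissible set as the interval $\{1,\dots,g^*\}$.
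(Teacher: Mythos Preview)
Your proposal is correct and follows essentially the same route as the paper: both arguments rest on the observation that each index is the right endpoint of an interval-shaped admissible set, so that equality of two ordered indexes $c_1\le c_2$ reduces to the failure of the defining test for $c_2$ at $c_1+1$. The paper compresses this into two lines per case (invoking the ``uniquely defined maximum'' remarks made earlier rather than naming the interval property), whereas you spell out the interval property, the identities $g=\min(g^*,m)$ and $\bar\nu=\min(\nu,m)$, and the side case $\nu>m$ in (d) explicitly; this extra bookkeeping is sound and arguably makes the biconditionals in (c) and (d) tighter than the paper's sketch.
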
 
\begin{proof}
Since it is always true that $\nu\ge h$ (see \eqref{eq:hgnu}), the equality $\nu=h$ simply means that $\nu<h+1$. But, according to definition \eqref{eq:gindexDF-old*}, the latter inequality is equivalent to 
$S_{m_{*\myn}(h+1)}<(h+1)^2$, which is the claim of part (a). Essentially the same argument proves part (c), except that, due to the bound $\bar{\nu}\le m$, a special case arises if $h=m$, which automatically implies $\bar{\nu}=m$. 

Similarly, due to \eqref{eq:hgnu} we have $g^*\!\ge \nu$, while the inequality $g^*\!<\nu+1$ is equivalent to $S_{\nu+1}<(\nu+1)^2$, according to definition \eqref{eq:gindex**}, and the claim of part (b) follows. The same argument applies to part (d), with an additional consideration of the special case $\nu=m$. 
\end{proof}

Part (a) is  exemplified by  $\boldsymbol{x}=(3,2,1)$: here, $h=\nu=2$, while $m_{*\myn}(2)=2$, 
$m_{*\myn}(3)=1$ and $S_2=5>2^2$ but $S_1=3<3^2$. The same example gives $g^*\!=2$, confirmed by the inequality $S_3=6<3^2$, in line with part (b).  Furthermore, since $g=g^*\!$ and $\bar{\nu}=\nu$, this example also illustrates parts (c) and (d). As for the boundary case of parts (c) and (d), it occurs, for example, for $\boldsymbol{x}=(4,3,3)$, where $h=\bar{\nu}=\nu=g=3$. Another example of (b) is the John Nash case mentioned above.

\subsection{\bf Data example}

Here, we illustrate the calculation of the various citation indexes for real data collected by the first-named author (available online at  \href{https://github.com/Ruheyan/WoS-citation-data/tree/main}{https:\allowbreak //\allowbreak github.\allowbreak com/\allowbreak Ruheyan/\allowbreak WoS-\allowbreak citation-\allowbreak data/\allowbreak tree/\allowbreak main}).\footnote{Similar  citation data 
was used in a conference paper \citep{nuerISSI2021} and PhD thesis \citep{Nuer2023}.} The dataset comprises citation counts, with a cut-off date of 19th September 2022,
of 3,615 papers (with 73,730 citations in total) of 111 authors who published 
a paper in the first 10 issues of 
\emph{Electronic Journal of
Probability} (EJP), 
vol.\,24 (2019) 
(\href{https://projecteuclid.org/journals/electronic-journal-of-probability/volume-24/issue-none}{https:\allowbreak //\allowbreak projecteuclid.\allowbreak org/\allowbreak journals/\allowbreak electronic-\allowbreak journal-\allowbreak of-\allowbreak probability/\allowbreak volume-\allowbreak 24/\allowbreak issue-\allowbreak none}). The data 
were derived from the 
Web of Science \citep{WoS}. 


\begin{figure}
\centering
\includegraphics[width=1\textwidth]{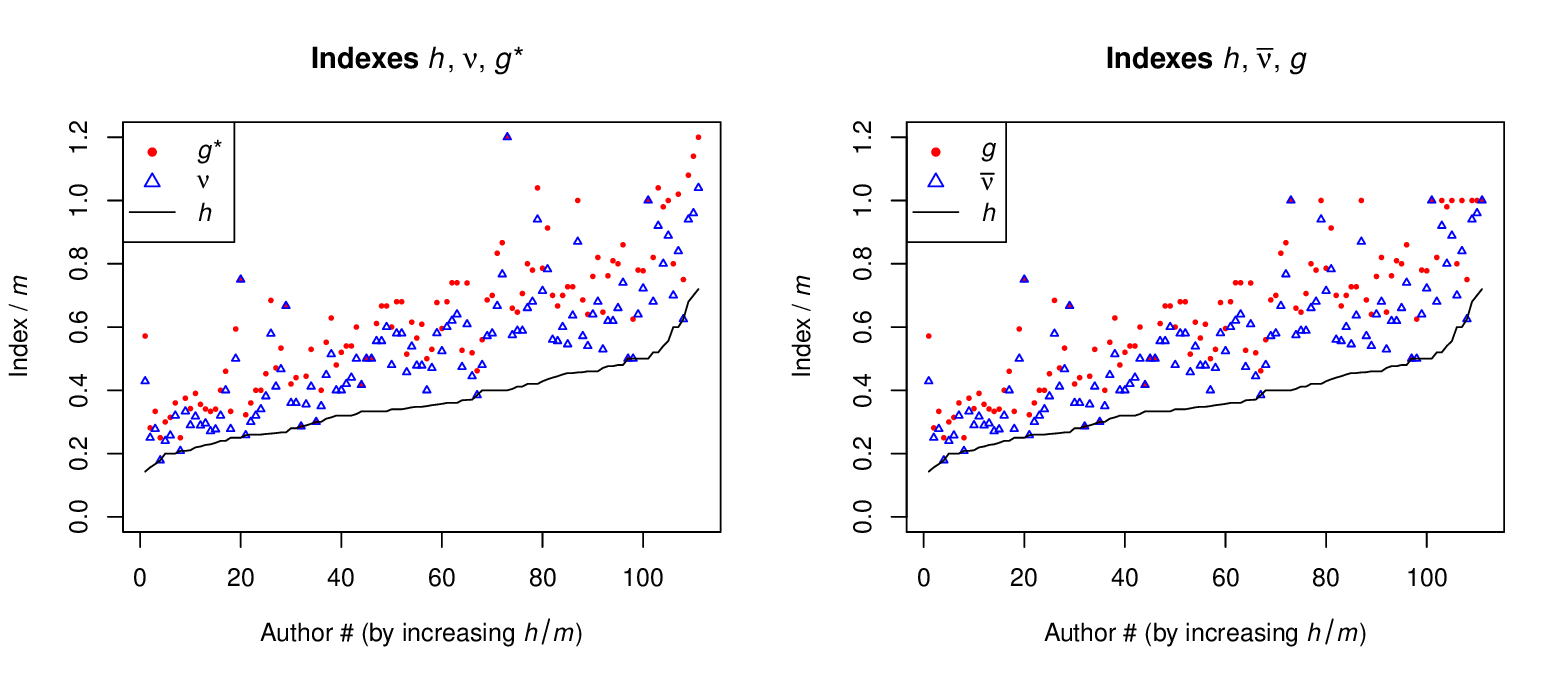}
\caption{Index plots for the EJP dataset, showing triplets of indexes ($h\le \nu\le g^*$ or $h\le\bar{\nu}\le g$) normalized by the number of published papers per author. The authors are ranked in increasing order with respect to the parameter $h/m$.}
\label{fig1}
\end{figure}

Fig.\ \ref{fig1} shows the plots representing the indexes $h$, $\bar{\nu}$, $\nu$, $g$, and $g^*$ (in triplets, for ease of comparison) for all 111 authors, normalized  
by the number of papers per author. The calculated values confirm the inequalities of Theorem \ref{th:1}, but one can observe that the new index ($\nu$ or $\bar{\nu}$) tends to be closer to the upper bound $g^*\!$ or $g$, respectively. Furthermore, Table \ref{tab2} presents correlations between different indexes\,---\,not surprisingly, they are all strongly positively correlated (especially in the ``sister'' pairs $(\nu,\bar{\nu})$ and $(g,g^*)$), but correlation with the number of papers ($m$) is weaker.

\medskip
\begin{table}[h!]
\caption{Pairwise correlations across  the citation indexes and the number of published papers ($m$).}       \label{tab2}
        \centering
        \begin{tabular}{|c|cccccc|}
           \hline
   {\bf index} &$h$& $\nu$ & $\bar{\nu}$ & $g$ &$g^*$&$m$ \\
\hline
$h$ 
&1.0000
&0.9649  
&0.9646  
&0.9656  
&0.9725 
&0.8044\\
$\nu$ 
&0.9649  
&1.0000
&0.9998 
&0.9932 
&0.9978 
&0.7743\\
$\bar{\nu}$ 
&0.9646  
&0.9998  
&1.0000 
&0.9942 
&0.9978  
&0.7768\\
$g$ 
&0.9656  
&0.9932 
&0.9942  
&1.0000 
&0.9967 
&0.8046\\
\ $g^*$ 
&0.9725  
&0.9978  
&0.9978 
&0.9967  
&1.0000
&0.7893\\
$m$ 
&0.8044 
&0.7743  
&0.7768 
&0.8046  
&0.7893 
&1.0000\\
\hline
        \end{tabular}
     \end{table}



\section{Parametric family $(\nu_\alpha)$}

\subsection{\bf Definition and monotonicity}

It is quite natural to generalize the definition of the index $\nu$ in  \eqref{eq:gindexDF-old} by considering different powers. Namely, for $\alpha\ge0$ we define the $\nu_\alpha$-index as
\begin{equation}\label{eq:nu_alpha}
\nu_\alpha\equiv \nu_\alpha(\boldsymbol{x})=\max\left\{j\ge 1\colon \sum_{i=1}^{m} x_i^\alpha\myp \boldsymbol{1}_{\{x_i\ge j\}}\ge j^{\alpha+1}\right\}\!.
\end{equation}
Clearly, for $\alpha=0$ and $\alpha=1$ this definition is reduced to \eqref{eq:hindexDF} and \eqref{eq:gindexDF-old}, respectively:
\begin{equation*}
\nu_0=h, \qquad \nu_1=\nu.
\end{equation*}
Like in \eqref{eq:hindexDF} and \eqref{eq:gindexDF-old}, the existence and uniqueness of the maximum in \eqref{eq:nu_alpha} is self-evident, noting that the sum is a decreasing function of $j$ while the right-hand side is strictly increasing. It is straightforward to verify that $\nu_\alpha$ satisfies (C1)--(C3). We also observe the monotonicity of the family $(\nu_\alpha)$.
\begin{theorem}\label{th:3}
The function $\nu_\alpha$ is increasing in $\alpha\ge0$.
\end{theorem}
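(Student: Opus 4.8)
The plan is to prove the stronger-looking but equivalent statement that $\nu_\alpha\le\nu_\beta$ whenever $0\le\alpha\le\beta$. Since each of $\nu_\alpha$ and $\nu_\beta$ is, by definition \eqref{eq:nu_alpha}, the maximum of a set of admissible integers $j\ge1$ (with the usual convention $\max\varnothing=0$), the natural strategy is \emph{not} to compare the two admissible sets — they are not nested in $\alpha$, so that route fails — but rather to take the single maximizer $j=\nu_\alpha$ and show that this particular $j$ is also admissible for $\nu_\beta$. First I would dispose of the trivial case $\nu_\alpha=0$, which is immediate because $\nu_\beta\ge0$ always; so from now on one may assume $j:=\nu_\alpha\ge1$.

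Next, by the defining property of $\nu_\alpha$ we have $\sum_{i=1}^m x_i^\alpha\,\boldsymbol{1}_{\{x_i\ge j\}}\ge j^{\alpha+1}$. The one elementary fact doing all the work is that, on the level set $\{x_i\ge j\}$, each citation count satisfies $x_i^\beta=x_i^\alpha\cdot x_i^{\beta-\alpha}\ge x_i^\alpha\,j^{\beta-\alpha}$, since $\beta-\alpha\ge0$ and $x_i\ge j\ge1$ (the lower bound $j\ge 1$ also removes any ambiguity about $0^0$ when $\alpha=0$). Multiplying by $\boldsymbol{1}_{\{x_i\ge j\}}$, which makes the inequality hold termwise for every $i$, and summing gives $\sum_{i=1}^m x_i^\beta\,\boldsymbol{1}_{\{x_i\ge j\}}\ge j^{\beta-\alpha}\sum_{i=1}^m x_i^\alpha\,\boldsymbol{1}_{\{x_i\ge j\}}\ge j^{\beta-\alpha}\cdot j^{\alpha+1}=j^{\beta+1}$. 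This is exactly the admissibility condition for $j$ in the definition of $\nu_\beta$, hence $\nu_\beta\ge j=\nu_\alpha$, which is what we wanted.

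Honestly, there is no serious obstacle in this proof; the only points requiring a little care are conceptual and bookkeeping ones, namely (i) recognising that one must argue at the maximiser $j=\nu_\alpha$ rather than about inclusion of the admissible sets, and (ii) handling the degenerate corners ($\boldsymbol{x}=\boldsymbol{0}$, zero citation counts, $h=0$), which is precisely why the reduction to $j\ge1$ is stated up front. One could instead route the estimate $\sum_i x_i^\beta\ge j^{\beta+1}$ through Hölder's inequality with exponents $\beta/\alpha$ and its conjugate, but the pointwise bound $x_i^\beta\ge x_i^\alpha j^{\beta-\alpha}$ on $\{x_i\ge j\}$ is both shorter and valid uniformly, including the boundary case $\alpha=0$ (where it recovers $\nu_0=h\le\nu_\beta$).
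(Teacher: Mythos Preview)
Your proof is correct and uses the same key observation as the paper: on the event $\{x_i\ge j\}$ one has $x_i/j\ge 1$, so raising to a higher power only helps. The paper packages this slightly more cleanly by dividing the defining inequality through by $j^\alpha$ to obtain $\sum_i (x_i/j)^\alpha\,\boldsymbol{1}_{\{x_i\ge j\}}\ge j$, after which monotonicity of the left-hand side in $\alpha$ is immediate for every fixed $j$. In particular this shows that the admissible sets \emph{are} nested in $\alpha$, so your remark that ``they are not nested in $\alpha$, so that route fails'' is mistaken --- indeed your own termwise bound $x_i^\beta\ge x_i^\alpha j^{\beta-\alpha}$, applied at an arbitrary admissible $j$ rather than only at $j=\nu_\alpha$, proves exactly this nesting.
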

\begin{proof}
Rewrite \eqref{eq:nu_alpha} as 
\begin{equation}\label{eq:nu-alpha-norm1}
\nu_\alpha=\max\left\{j\ge 1\colon \sum_{i=1}^{m} \left(\frac{x_i}{j}\right)^{\mynn \alpha} \!\boldsymbol{1}_{\{x_i\ge j\}}\ge j\right\}\!,
\end{equation}
and note that the sum in \eqref{eq:nu-alpha-norm1} is monotone increasing in $\alpha$, since $x_i/j\ge 1$. 
\end{proof}

As an illustration of sensitivity and fluidity of $\nu_\alpha$, in the John Nash case it is easy to check that, for example, for $\alpha=0.5$ we have $\nu_{0.5}=35$, 
compared to $\nu_1\mynn=g^*\!=85$. 
R code to calculate $\nu_\alpha$ is given below:

\medskip
\begin{minipage}{\hsize}%
\lstset{language=R}
\begin{lstlisting}[frame=single,framexleftmargin=-1pt,framexrightmargin=-17pt,framesep=12pt,linewidth=0.98\textwidth,showstringspaces=false]
# nu.alpha
x <- sort(x, decreasing=TRUE)
nu.alpha <- function(alpha)
{ sapply(alpha, function(a) 
  { nu <- 0
    while (sum((x[x >= (nu + 1)] / (nu + 1))^a)
          >= (nu + 1)) 
    { nu <- nu + 1
    }
    return(nu)
  })
}
# Plotting the output:
curve(nu.alpha, col = "red", lwd = 2, 
      xlim = c(0, max(x)+20), ylim = c(1, max(x)), 
      xlab = expression(paste(alpha)),
      ylab = expression(paste(nu[alpha])),
      main = bquote(paste(bold("x "), "= (", 
      .(toString(x)), ")")))
\end{lstlisting}
\end{minipage}

\medskip\bigskip\noindent
The next Fig.\ \ref{fig2} illustrates the behavior of the function $\nu_\alpha$ for some examples from Table \ref{tab:example}. The reader may also find it interesting to run this code on the citation data of John Nash.


\begin{figure}[h!]
\centering
\includegraphics[width=1\textwidth]{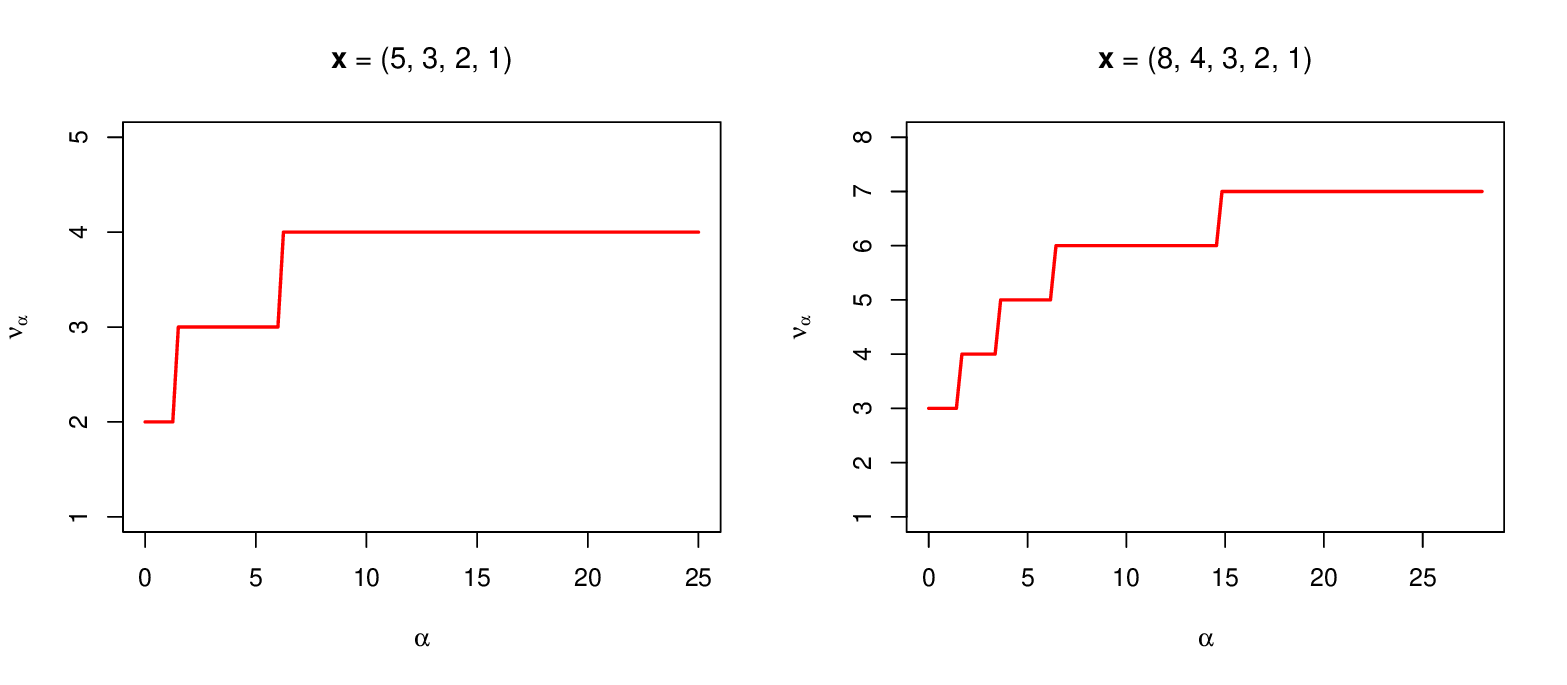}
\caption{{Illustrative graphs of the index $\nu_\alpha$ as a function of parameter $\alpha\in[0,\infty)$. Note the values $\nu_0=h$, $\nu_1=\nu$, and $\nu_\infty=x_1-1$ (cf.\ \eqref{eq:nu-inf}).}}
\label{fig2}
\end{figure}
\subsection{\bf The limit as $\alpha\to\infty$}
It is 
interesting to understand the meaning of the limiting value $\nu_\infty=\lim_{\alpha\to\infty} \nu_\alpha$.

\begin{theorem}\label{th:4}
For a citation vector $\boldsymbol{x}=(x_1,\dots,x_m)$, denote by $\ell_1=\sum_{i=1}^m \boldsymbol{1}_{\{x_i=x_1\}}\equiv m_{*\myn}(x_1)$ the multiplicity of the top citation $x_1=\max\{x_i,\,1\le i\le m\}$. Then 
\begin{equation}
\label{eq:nu-inf}
\nu_\infty(\boldsymbol{x})=
\begin{cases}
x_1-1&\text{if }\,\ell_1<x_1,\\
x_1&\text{if } \,\ell_1\ge x_1.
\end{cases} 
\end{equation}
\end{theorem}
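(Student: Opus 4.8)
The statement concerns the limiting index $\nu_\infty = \lim_{\alpha\to\infty}\nu_\alpha$, so the natural first step is to understand, for a \emph{fixed} trial value $j$, how the testing inequality in \eqref{eq:nu-alpha-norm1}, namely $\sum_{i=1}^m (x_i/j)^\alpha \boldsymbol{1}_{\{x_i\ge j\}}\ge j$, behaves as $\alpha\to\infty$. I would split the analysis at the threshold $j=x_1$, since the sign of $x_i/j-1$ on the summands with $x_i\ge j$ governs the asymptotics.

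\medskip\noindent\textbf{Step 1 (trial values $j\le x_1-1$).} For any such $j$, the top citation satisfies $x_1/j>1$, so the single term $i=1$ already contributes $(x_1/j)^\alpha\to\infty$. Hence for all sufficiently large $\alpha$ the inequality $\sum_i (x_i/j)^\alpha\boldsymbol{1}_{\{x_i\ge j\}}\ge j$ holds, so $j$ is admissible and $\nu_\alpha\ge j$. Taking $j=x_1-1$ gives $\liminf_{\alpha\to\infty}\nu_\alpha\ge x_1-1$. (One should handle the trivial case $x_1=0$, i.e. $\boldsymbol{x}=\boldsymbol 0$, separately: there $\nu_\alpha=0$ for all $\alpha$ and the formula reads $x_1=0$ with $\ell_1=m\ge 1\ge 0=x_1$, so the second branch applies — consistent.)

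\medskip\noindent\textbf{Step 2 (the borderline $j=x_1$).} Here every summand with $x_i\ge x_1$ is in fact $x_i=x_1$, so $(x_i/x_1)^\alpha=1$, and the sum equals exactly $\ell_1=m_{*\myn}(x_1)$, independent of $\alpha$. Thus $j=x_1$ is admissible for \emph{every} $\alpha$ iff $\ell_1\ge x_1$, and inadmissible for every $\alpha$ iff $\ell_1<x_1$. This immediately pins down the two branches once we also rule out anything above $x_1$.

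\medskip\noindent\textbf{Step 3 (trial values $j\ge x_1+1$).} For such $j$ there is no index $i$ with $x_i\ge j$ at all (since $x_1$ is the maximum), so the sum is empty, equals $0<j$, and $j$ is never admissible, for any $\alpha$. Combining Steps 1--3: for large $\alpha$, if $\ell_1\ge x_1$ the largest admissible $j$ is exactly $x_1$, giving $\nu_\alpha=x_1$ eventually; if $\ell_1<x_1$, then $j=x_1$ is never admissible while $j=x_1-1$ is admissible for large $\alpha$, giving $\nu_\alpha=x_1-1$ eventually. In either case $\nu_\alpha$ is \emph{eventually constant}, so the limit $\nu_\infty$ exists and equals the claimed value. (Monotonicity from Theorem~\ref{th:3} is a convenient sanity check but not logically needed once eventual constancy is shown.)

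\medskip\noindent\textbf{Main obstacle.} There is no deep obstacle here; the only care needed is bookkeeping at the boundary $j=x_1$ — recognizing that the summands with $x_i\ge x_1$ are exactly those with $x_i=x_1$, so the sum collapses to the integer $\ell_1$ with no $\alpha$-dependence — and making sure the degenerate case $\boldsymbol{x}=\boldsymbol{0}$ is consistent with the stated formula. Everything else is an elementary comparison of a finite sum of powers against a fixed threshold as the exponent grows.
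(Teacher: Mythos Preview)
Your proposal is correct and follows essentially the same approach as the paper: both work from the normalized form \eqref{eq:nu-alpha-norm1} and split into the three regimes $j<x_1$, $j=x_1$, $j>x_1$, observing that $(x_i/j)^\alpha$ tends to $\infty$, equals $1$, or (after the indicator) vanishes, respectively. Your write-up simply fleshes out the details (eventual constancy, the degenerate case $\boldsymbol{x}=\boldsymbol{0}$) that the paper leaves implicit in its one-sentence proof.
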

\begin{proof}
Follows using \eqref{eq:nu-alpha-norm1} by noting that $(x_1/j)^\infty$ equals $\infty$, $1$ or $0$ according as $j<x_1$, $j=x_1$ or $j>x_1$, respectively.
\end{proof}

\section{Conclusion}

We have introduced some new citation indexes starting with $\nu=\nu_1$, and investigated their relations with the classical indexes $h$ and $g$. As already mentioned, the $h$-index is straightforward and informative, but it is limited by only acknowledging the fact of a high citation but not the actual number of citations. In contrast, the $g$-index is based exclusively on the citations of a few top papers, but ignoring the ``footing'' of lower-cited papers. 

Our synthetic proposal of the $\nu$-index is designed so as to take into account both higher and lower cited papers, which may assess the individual's productivity in a more fair and balanced way. Indeed, we have seen that the $\nu$-index is in a sense bridging Hirsch's $h$ and Egghe's $g$. Furthermore, the spectrum of the indexes $(\nu_\alpha)$ provides a flexible toolkit that allows one either to enhance or to inhibit the input from top-cited papers, as required. 

Of course, it goes without saying that none of these, or any other indexes known in the literature, is perfect and should replace the rest. 
In fact, a reasonable practical recommendation may be to choose a few indexes to judge someone's academic achievement, depending on the assessment requirements and also on the specific features of the scientific domain. 
In this regard, it may be useful to choose the parameter $\alpha$ in the index $\nu_\alpha$ according to certain individual features of the citation vector $\boldsymbol{x}$, in the spirit of limit theorems for norms of random vectors \citep{Bogachev2006,Schlather}. We will address this issue in our future work. 

In conclusion, we 
reiterate that prudence, maturity and  care should be exercised when using citation indexes in social practice, especially making sure to avoid misuse and/or abuse of their utility as predictors of future performance and productivity. Although citation indexes succinctly grasp some objective aggregated information from citation records, they are deceptively easy to compute,  replacing individual research track records with a simple number, while these results should be verified and complemented by human evaluation by experts. 

The scientometrics community has quickly realized, and extensively documented, the growing threat of misusing the $h$-index and other indicators for far reaching and
often unjustified implications in the social interpretation  (see, e.g., \citealp{alonso2009h,Costas2007,Hicks2015, Thelwall2025, {waltman2016}}, and further references therein). These concerns and wide discussions have led to the creation and promotion of good practice protocols, such as the San Francisco Declaration on Research Assessment 
\citep{DORA} or the Leiden Manifesto \citep{Hicks2015}. 

The risks are further amplified by the fast growing use of Artificial Intelligence (AI) including Large Language Models such as ChatGPT, whereby the responsibility for conclusions and extrapolations may be
delegated inadvertently to the computer \citep{Thelwall2025}. Although deployment of AI for assistance in technical analyses and summarization is an inevitable and welcome trend, the best vaccine against misuse and abuse is to combine formal calculations and summaries with a robust comparison against the specific domain ``golden standards'', based on an objective expert evaluation and enhanced by a reproducible and unbiased statistical analysis.


\subsection*{Funding}
R.N.\ acknowledges funding from the School of Mathematics at the University of Leeds for attending the STI-ENID\,2025 conference to present a poster based on this work.

\subsection*{Author contributions}
L.B.\ and R.N.\ designed the study and wrote the first draft of the paper. L.B.\ and J.V.\ contributed to the mathematical derivation and interpretation of the results. 
All authors agreed with the final version of the paper.

\subsection*{Previous presentation}
These results were 
presented as a poster 
at the 29th Annual International Conference on Science 
and Technology Indicators (STI-ENID\,2025), Bristol, UK,
3--5 September 2025 (\href{https://virtual.oxfordabstracts.com/event/74626/submission/143}{https:\allowbreak //\allowbreak virtual.\allowbreak oxfordabstracts.\allowbreak com/\allowbreak event/\allowbreak 74626/\allowbreak submission/\allowbreak 143}).



\subsection*{Data availability}
EJP dataset used in this article is available online at \href{https://github.com/Ruheyan/WoS-citation-data/tree/main}{https:\allowbreak //\allowbreak github.\allowbreak com/\allowbreak Ruheyan/\allowbreak WoS- \allowbreak citation-\allowbreak data/\allowbreak tree/\allowbreak main}.

\bibliographystyle{plain}


\end{document}